\newtheorem{lemma}{Lemma}
\newtheorem{proposition}{Proposition}
\newtheorem{corollary}{Corollary}
\newtheorem{property}{Property}
\newtheorem{remark}{Remark}
\newtheorem{claim}{Claim}
\begin{document}

\title{\huge Integrated Sensing and Communication for Edge Inference with End-to-End Multi-View Fusion}

\author{Xibin Jin, Guoliang Li, Shuai Wang, Miaowen Wen,\\Chengzhong Xu, \emph{Fellow, IEEE}, and H. Vincent Poor, \emph{Life Fellow, IEEE}  

\thanks{
Xibin~Jin and Miaowen~Wen are with the School of Electronic and Information Engineering, South China University of Technology, Guangzhou, China.
Guoliang~Li and Chengzhong~Xu are with the Department of Computer and Information Science, University of Macau, Macau, China.
Shuai~Wang is with the Shenzhen Institute of Advanced Technology, Chinese Academy of Sciences, Shenzhen, China.
H. Vincent Poor is with the Department of Electrical and Computer Engineering, Princeton University, Princeton, USA. 
Corresponding author: Shuai Wang ({\tt\footnotesize s.wang@siat.ac.cn}) and Miaowen Wen ({\tt\footnotesize eemwwen@scut.edu.cn}).
}
}
\maketitle

\begin{abstract}
Integrated sensing and communication (ISAC) is a promising solution to accelerate edge inference via the dual use of wireless signals. 
However, this paradigm needs to minimize the inference error and latency under ISAC co-functionality interference, for which the existing ISAC or edge resource allocation algorithms become inefficient, as they ignore the inter-dependency between low-level ISAC designs and high-level inference services.
This letter proposes an inference-oriented ISAC (IO-ISAC) scheme, which minimizes upper bounds on end-to-end inference error and latency using multi-objective optimization. The key to our approach is to derive a multi-view inference model that accounts for both the number of observations and the angles of observations, by integrating a half-voting fusion rule and an angle-aware sensing model. Simulation results show that the proposed IO-ISAC outperforms other benchmarks in terms of both accuracy and latency. 
\end{abstract}

\begin{IEEEkeywords}
Edge inference, integrated sensing and communication, multi-view fusion.
\end{IEEEkeywords}

\IEEEpeerreviewmaketitle

\section{Introduction}

\IEEEPARstart{R}{ealizing} perception intelligence at distributed wireless sensors is a challenge due to the conflict between computationally demanding deep neural networks (DNNs) and computationally limited sensors \cite{EI_Over}. Edge inference eases this conflict by providing DNN inference services to sensors with a proximal server \cite{Task-Ori}, \cite{MADDPG}. 
Existing edge inference consists of sensing, communication, and computation \cite{Task-Ori}.
In this letter, the sensing and communication stages are merged via the ISAC technique, leading to a two-stage edge inference solution, i.e., $\mathsf{ISAC}+\mathsf{Computation}$, thus a one-stage reduction compared to the existing solution.

In contrast to existing ISAC systems \cite{Edge_Acclerate}, ISAC edge inference systems aim to minimize the inference error instead of maximizing the sensing resolution or the communication data-rate. 
Therefore, the associated resource allocation becomes very different from traditional ISAC resource allocation that considers only the wireless channels.
For instance, the celebrated CRB-rate scheme allocates more resources to deterministic (random) channels for CRB (rate) maximization \cite{CRB_Rate}, and the multi-point ISAC scheme allocates more resources to closer-to-target devices to maintain the quality of sensing measurements \cite{MPISAC}. 
While these schemes have proven to be very efficient in traditional ISAC systems, they could lead to poor performance in edge inference systems (as illustrated in Fig.~\ref{System}), because they do not account for the inference requirements, e.g., the sensing data diversity that depends on the angle of observation. 
Note that existing edge inference schemes consider off-the-shelf sensing datasets, which ignore low-level sensing signal designs (e.g., physical-layer beamforming and power allocation) \cite{GNN_Edge} and their impact on high-level inference services.

This paper proposes an end-to-end approach that integrates sensing, communication, and computation objectives and constraints into a unified optimization framework.
First, to optimize the inference performance, we need an expression for the multi-view inference error with respect to the number of observations and the angles of observations. 
Hence, this paper derives an end-to-end inference error model by integrating a half-voting fusion rule and an angle-aware sensing model, and this newly derived model serves as an upper bound on the actual inference error.
Second, while ISAC inference eliminates one stage, it does not necessarily reduce the end-to-end inference latency, as ISAC also introduces interference between sensing and communication functionalities \cite{Aircomp}.
Therefore, this paper further derives the upper bound of the end-to-end sensing, communication, and computation latency with respect to the co-functionality interference. 
Third, to optimize the two conflicting objectives. i.e., inference error and inference latency, 
we minimize their upper bounds according to majorization minimization (MM), and propose a joint power allocation and device scheduling (JPADS) algorithm, leading to an inference-oriented ISAC (IO-ISAC) scheme.
Finally, various simulation results demonstrate the effectiveness of our proposed scheme, and show that the proposed IO-ISAC outperforms other benchmark schemes in terms of the inference accuracy and latency. 
To the best of our knowledge, this represents the first inference oriented paradigm guiding the design of ISAC systems. 

\section{System Model}
\begin{figure*}[!t]
\centering
	\vspace{-0.2cm}
	\subfigure[]{
		\label{System}
		\includegraphics[width=0.9\linewidth]{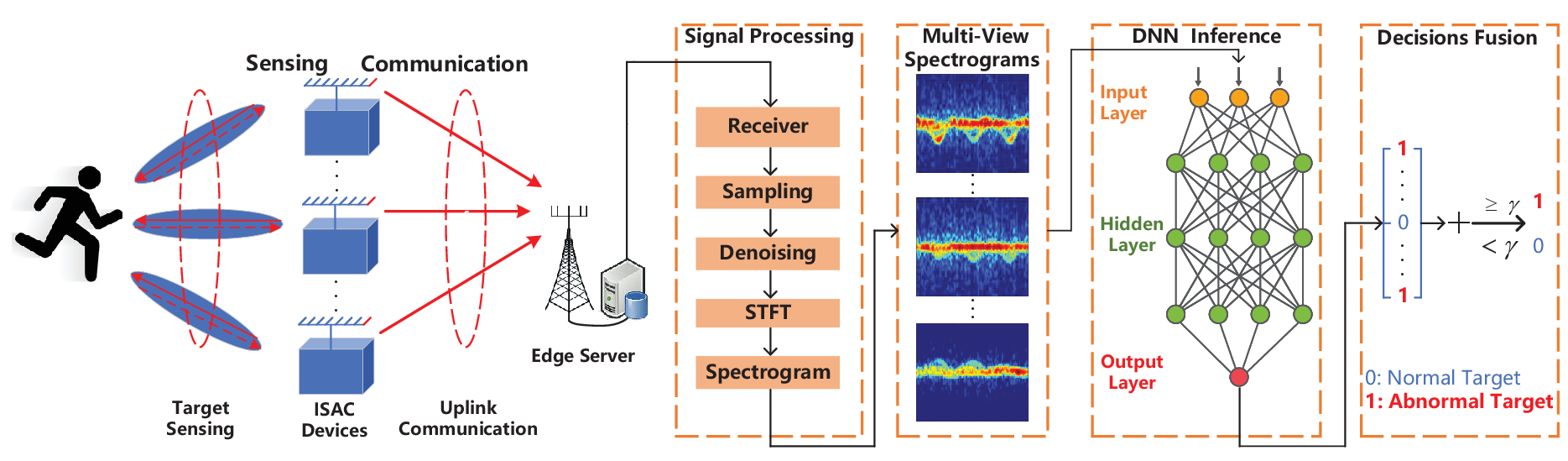}}
	\subfigure[]{
		\label{Arc}
		\includegraphics[width=0.45\linewidth]{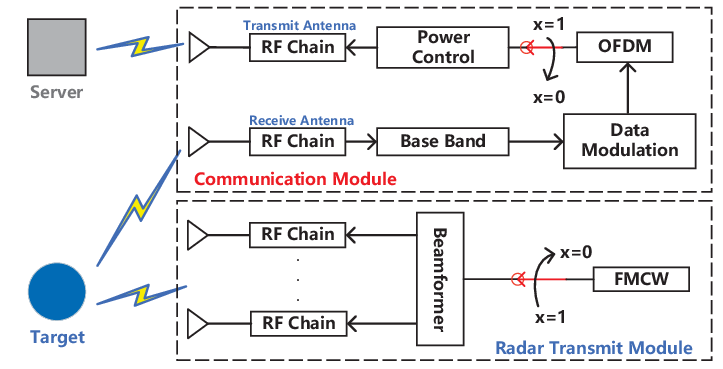}}
  	\subfigure[]{
  		\label{Sensing_Range}
		\includegraphics[width=0.45\linewidth]{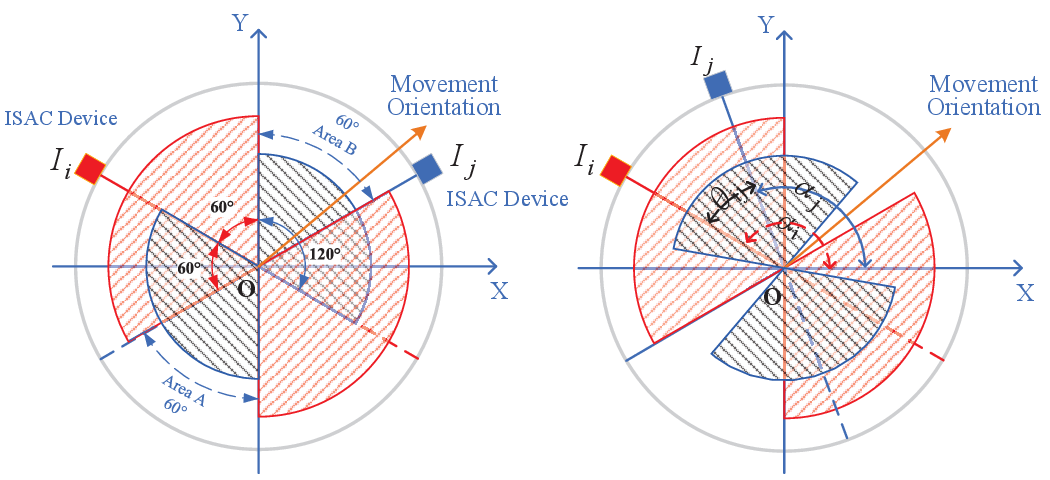}}
  \vspace{-0.1in}
	\caption{System model. (a) ISAC edge inference system with $N$ ISAC devices and 1 server; (b) Architecture of ISAC device; (c) Angle-aware sensing model. The target is at the center. The red and blue regions denote the effective sensing fields of devices $i$ and $j$, respectively.} 
	\label{Entire_1}
	\vspace{-0.2in}
\end{figure*}

We consider an ISAC edge inference system shown in Fig.~\ref{System}, which consists of $N$ ISAC devices each with $N_{d}$ antennas and one edge server with $N_{e}$ antennas. 
The task of our system is to detect the state of a target (i.e., normal or abnormal) using two consecutive stages.
In the first ISAC stage, ISAC devices transmit radio signals to sense the target and upload the echo sensing data to the server simultaneously at the same frequency band.
In the second computation stage, the server extracts features from the sensing data using DNNs and fuses the local inference results into a global decision.
Below we present the model for the ISAC stage.

In particular, Fig.~\ref{Arc} demonstrates the architecture of an ISAC device, which consists of $N_{t}$ transmit antennas and $N_{r}$ receive antennas, where $N_{t}+N_{r}=N_{d}$.
The transmit antennas are divided into two groups \cite{Antenna}, with $N_{s}$ antennas for sensing and the other $N_{c}$ antennas for communication. 
Without loss of generality, it is assumed that 
$N_r=N_c=1$.
To mitigate the interference leakage and energy cost, an activity controller with binary output $x_{i}\in\{0,1\}$ is adopted to determine the working state of the $i$-th ISAC device, where $x_{i}=1$ represents that the $i$-th ISAC device is activated and $x_{i}=0$ otherwise. 
Let $\mathbf{x}=[x_1,\ldots,x_N]^T\in\{0,1\}^N$ denote the states of all ISAC devices and $\mathcal{S}=\{i:x_{i}=1\}$ denote the set of activated ISAC devices with cardinality $|\mathcal{S}|$.

\subsubsection{Target Sensing Model}
The $i$-th ISAC device sends a sensing signal to probe the target and the echo signal received at the $i$-th ISAC device is
\begin{align}
	\!\!\!
	r_{i}&=
	x_{i}\mathbf{g}_{i}^{H}\mathbf{f}_{i}s_i+
	\sum\limits_{j \neq i}x_{j}\left[\mathbf{q}_{ji}^{H}\mathbf{f}_{j}s_{j}+\sqrt{p_{j}^{c}}c_{ji}d_j\right]
	+n_s, \label{r_s}
\end{align}
where $i, j \in \mathcal{S}$, $s_i\in \mathbb{C}$ with $\mathbb{E}\left[|s_i|^2\right]=1$ is the sensing signal, $d_{i}\in \mathbb{C}$ with $\mathbb{E}\left[|d_{i}|^2\right]=1$ is the communication signal, and $n_{s} \sim \mathcal{CN}(0,\sigma_{s}^2)$ is additive white Gaussian noise (AWGN) with power $\sigma_{s}^2$. Vector $\mathbf{g}_{i} \in \mathbb{C}^{N_{s}}$ denotes the sensing echo channel from the $i$-th device to the target and then back to the $i$-th device. The vectors $\mathbf{q}_{ji} \in \mathbb{C}^{N_{s}}$ and $\mathbf{f}_{j} \in \mathbb{C}^{N_{s}}$ denote the sensing interference channel from the $j$-th device to the $i$-th device and the sensing beamformer of the $j$-th device, respectively. The scalars $p_{j}^{c}$ and $c_{ji} \in \mathbb{C}$ represent the communication transmit power of the $j$-th device and the communication interference channel from the $j$-th device to the $i$-th device, respectively. 
We adopt zero-forcing beamforming $\mathbf{f}_i=\sqrt{p_{i}^{s}}\,\mathbf{f}_i^{\mathrm{ZF}}$ to mitigate sensing interference among different devices \cite{ISAC_overview2}, where $p_{i}^{s}$ and $\mathbf{f}_{i}^{\mathrm{ZF}}$ denote the transmit sensing power and the steering vector for interference cancelation of the $i$-th device. The vector $\mathbf{f}_{i}^{\mathrm{ZF}}$ is the normalized $1$-st column of $\mathbf{F}_i^{H}\left(\mathbf{F}_i\mathbf{F}_i^{H}\right)^{-1}$, 
where $\mathbf{F}_i=\left[\mathbf{g}_{i}~\mathbf{q}_{i1}\dots\mathbf{q}_{ik}\dots\right]^H$ for $k \in \mathcal{S}$ and $i \neq k$.
The sensing signal-to-interference-plus-noise ratio (SINR) of the $i$-th ISAC device is
\begin{align}
&\text{SINR}_i^{\mathrm{s}}=\frac{x_{i}p_{i}^{s}|\mathbf{g}_{i}^{H}\mathbf{f}_{i}^{\mathrm{ZF}}|^2}
{\sigma_{s}^2 + \sum_{j\neq i}x_jp_{j}^{c}|c_{ji}|^{2}}.
\end{align}

\emph{Remark 1}: The echo signal $r_i$ will be fed to the processor module for generating micro-doppler spectrograms shown in Fig.~\ref{System}. As such, the target sensing problem is transformed into an image classification problem that can be solved by DNNs \cite{HMR_Range}. 
Note that if the sensing SINR is below a certain threshold, the spectrogram quality and the DNN accuracy would be significantly degraded as shown in Fig.~5 of \cite{JSTSP}.
Such degradation also emerges if the sensing angle exceeds a certain threshold \cite{HMR_Range}, as shown in Fig.~\ref{Sensing_Range}.

\subsubsection{Uplink Communication Model}
The $i$-th ISAC device also transmits the communication signal for uploading the echo sensing data to the edge server, and the associated received signal at the server is \cite{Complex_Matrix}
\begin{align}
\mathbf{y}_i&=x_{i}\sqrt{p_{i}^{c}}\mathbf{h}_{i}d_{i}+\sum_{j}x_{j}\mathbf{O}_{j}\mathbf{f}_{j}s_{j}
+\sum_{j \neq i}x_{j}\sqrt{p_{j}^{c}}\mathbf{h}_{j}d_{j}+\mathbf{n}_{c}, 
\end{align}
\noindent where $i, j \in \mathcal{S}$, {the vector} $\mathbf{h}_{j} \in \mathbb{C}^{N_{e}}$ is the communication channel from the $j$-th device to the server. The matrix $\mathbf{O}_{j} \in \mathbb{C}^{N_{e} \times N_{s}}$ represents the sensing interference channel from the $j$-th device to the edge server. {The vector} $\mathbf{n}_{c} \sim \mathcal{CN}(\mathbf{0},\sigma_{c}^2\mathbf{I}_{N_e})$ is the AWGN. 
Since the transmitted sensing signal is deterministic, e.g., frequency modulated continuous wave (FMCW), and the channels $\{\mathbf{f}_{j},\mathbf{O}_{j}\}$ are known at the server, the sensing interference term can be eliminated \cite{eliminate}. The server exploits a receiving filter $\mathbf{w}_{i} \in \mathbb{C}^{N_{e}}$ to recover the transmit signal of the $i$-th device. The uplink signal of the $i$-th device recovered at the server is $y_{i}\!=\!\mathbf{w}_{i}^{H}\mathbf{y}\!$, i.e.,
$y_{i}\!= x_{i}\!\sqrt{p_{i}^{c}}\mathbf{w}_{i}^{H}\mathbf{h}_{i}d_{i}\!+\!\sum_{j \neq i}\!x_{j}\!\sqrt{p_{j}^{c}}\mathbf{w}_{i}^{H}\mathbf{h}_{j}d_{j} 
	     \!+\!\mathbf{w}_{i}^{H}\mathbf{n}_{c}$.
Here we adopt zero-forcing receive filters $[\mathbf{w}_{1}^{H},\mathbf{w}_{2}^{H},...,\mathbf{w}_{k}^{H}]^{T}=(\mathbf{H}^{H}\mathbf{H})^{-1}\mathbf{H}^{H}$ such that $\mathbf{w}_{i}^{H}\mathbf{h}_{j}=0$ for $\forall i \neq j$, where $\mathbf{H}=[\mathbf{h}_{1},\ldots,\mathbf{h}_{k},\ldots]$, $\forall k \in \mathcal{S}$.
The associated uplink communication SINR of the $i$-th device is
\begin{align}
	\text{SINR}_{i}^{\mathrm{c}}=\frac{x_{i}p_{i}^{c}|\mathbf{w}_{i}^{H}\mathbf{h}_{i}|^{2}}
	{\|\mathbf{w}_{i}^{H}\|_2^{2}\sigma_{c}^2}.
\end{align}


\section{Inference Oriented ISAC}

This section presents the end-to-end model and algorithms by integrating the $\mathsf{ISAC}$ and $\mathsf{Comput-}$ 
$\mathsf{ation}$ stages.
First, each DNN at the server recognizes the micro-Doppler spectrograms generated by active devices as shown in Fig.~\ref{System}. Then multiple DNN inference results are fused into a global inference output via voting. The inference error $1-\Phi$ (i.e., $\Phi$ denotes inference accuracy) and latency $\Xi$ are functions of activation switch $\mathbf{x}$, sensing signal powers $\{p_i^s\}$, and communication signal powers  $\{p_i^c\}$.
The inference oriented problem is formulated as a multi-objective optimization problem of minimizing the error and the latency of inference, under individual and total transmit power constraints:
\begin{subequations}
	\begin{align}
	\mathrm{(P0)} \mathop{\mathrm{min}}_{\substack{\mathbf{x},\{p_{i}^{s}\},\{p_{i}^{c}\}}}~
	&\left(1-\Phi\left(\mathbf{x},\{p_{i}^{s}\},\{p_{i}^{c}\}\right),\,\Xi\left(\mathbf{x},\{p_{i}^{s}\},\{p_{i}^{c}\}\right)\right) \nonumber \\
	\mathrm{s.t.}~~~~~
        &x_i \in \left\{0,1\right\}, ~\forall i, \label{P01}\\
 	&p_{i}^{s}+ p_{i}^{c} \leq P_{\max}, ~\forall i, \label{P02}\\
	&\sum_{i} (p_{i}^{s} +p_{i}^{c} ) \leq P_{\rm{sum}}, ~ i \in \mathcal{S}, \label{P03}
	\end{align}
\end{subequations}

\noindent where $P_{\rm{sum}}$ and $P_{\max}$ represent the maximum transmit power of the system and of each ISAC device, respectively.
The challenge to solve $\mathrm{P0}$ is that the functions $(\Phi,\Xi)$ have no explicit expressions. To address this issue, Sections III-A and III-B will derive bounds for $(\Phi,\Xi)$, and Section III-C optimizes the bounding functions according to MM.

\subsection{End-to-End Inference Error Model}
We first derive the inference accuracy when $|\mathcal{S}|\!=\!1$. 
Without loss of generality, consider the $i$-th ISAC device and its inference accuracy $\Phi_{i}$ is modeled as
\begin{align}
\Phi_i&=P(H_0)(1-P(B_{i}=1|H_0))
+P(H_1)(1-P(B_{i}=0|H_1)),
\end{align}
\noindent where $P(\cdot)$ represents the probability function, $H_0$ and $H_1$ are hypotheses that represent ``normal target'' and ``abnormal target'', $P(H_0)$ and $P(H_1)$ are the prior probabilities of the two hypotheses, and $B_{i}$ represents the binary inference result of the $i$-th device, with $B_{i}=0$ standing for ``normal target'' and $B_{i} = 1$ standing for ``abnormal target''. 
Now we consider two factors that affect $\Phi_i$ as mentioned in \emph{Remark 1}. First, $\Phi_i$ depends on the $\text{SINR}_{i}^{s}$. If $\text{SINR}_{i}^{s} \geq \beta$, where $\beta$ denotes the sensing quality threshold, then a satisfactory $\Phi_i$ can be achieved; otherwise, $\Phi_i$ would be degraded by a certain factor. According to \cite{JSTSP}, we set $\beta=27\,\text{dB}$.
Second, $\Phi_i$ depends on the observation angle $\xi_i$, which is defined as the aspect angle between the line-of-sight direction and the target motion direction.
If $|\mathrm{cos}(\xi_i)|\in[\alpha,1]$, where $\alpha$ denotes the angle threshold, a satisfactory $\Phi_i$ can be achieved; otherwise, $\Phi_i$ would be degraded by a certain factor. According to \cite{HMR_Range}, we set $\alpha=0.5$.
Based on the above, the false alarm probability is modeled as
\begin{align}
& P(B_{i}=1|H_0)=
\left \{
		\begin{array}{lr}
			P_{f},~~~\mathrm{if}~\text{SINR}_{i}^{s} \geq \beta,~|\mathrm{cos}(\xi_i)|\geq\alpha\\
			\lambda P_{f},~\mathrm{otherwise}  \\
		\end{array}
		\right., \nonumber
\end{align}
where $P_f$ is a constant value and $\lambda>1$ is the degradation factor.
Similarly, the missed detection probability is given by $P(B_{i}\!=\!0|H_1)\!=\!P_{m}$ if the SINR and observation angle requirements are satisfied; and $P(B_{i}\!=\!0|H_1)\!=\! \lambda P_{m}$ otherwise, where $P_m$ is also a constant.
The values of $(P_{f},P_{m},\lambda)$ can be estimated from the experimental data \cite{JSTSP, MPISAC}.

With the single-device model, we can derive the multi-view inference accuracy model $\Phi$.
Let $\gamma$ denote the voting threshold\footnote{The most widely adopted threshold is $\gamma=\lfloor \frac{|\mathcal{S}|}{2} \rfloor$, which corresponds to the half-voting rule for ensemble inference.} such that our global inference after fusion is
$\hat{H}_0: \sum_{i\in\mathcal{S}} B_{i} < \gamma$ and $\hat{H}_1: \sum_{i\in\mathcal{S}} B_{i} \geq \gamma$.
Following similar derivations in \cite{MPISAC}, the inference accuracy is given by 
$\Phi(\mathbf{x},\{p_i^s\},\{p_i^c\})
=\Theta(\mathbf{x},\{p_i^s\},\{p_i^c\}; Z)$, where $Z$ is the actual number of ISAC devices satisfying the SINR and angle requirements, and
\begin{align}\label{Theta}
	\begin{array}{*{35}{l}}
		{} & \Theta (\mathbf{x},\{p_{i}^{s}\},\{p_{i}^{c}\};z)=P({{H}_{0}})\sum\limits_{n=\gamma }^{|S|}{}\sum\limits_{k={{k}_{1}}}^{{{k}_{2}}}{}\left( \begin{smallmatrix}
			z \\
			k
		\end{smallmatrix} \right)\left( \begin{smallmatrix}
			|\mathcal{S}|-z \\
			n-k
		\end{smallmatrix} \right){{P}_{1}}  \\
		{} & \quad \quad \quad \quad \quad \quad +P({{H}_{1}})\sum\limits_{n=|S|-\gamma +1}^{|S|}{\sum\limits_{k={{k}_{1}}}^{{{k}_{2}}}{}}\left( \begin{smallmatrix}
			z \\
			k
		\end{smallmatrix} \right)\left( \begin{smallmatrix}
			|\mathcal{S}|-z \\
			n-k
		\end{smallmatrix} \right){{P}_{2}},  \\
	\end{array}
\end{align}

where $k_1=\max\{0,n-|\mathcal{S}|+z\}$, $k_2=\min\{z, n\}$, and\footnote{Due to the different locations and observation views, the sensing data at different ISAC devices is assumed to be independent.}
\begin{align}
	\begin{split}
		\left \{
		\begin{array}{lr}
			P_1=(1-P_f)^k (1-\lambda P_f)^{n-k} P_{m}^{z-k} (\lambda P_m)^{|\mathcal{S}|-z-n+k}  \\
            \\ \nonumber
            P_2=(1-P_m)^k (1-\lambda P_m)^{n-k} P_{f}^{|\mathcal{S}|-z-k} (\lambda P_f)^{|\mathcal{S}|-z-n+k}
		\end{array}
		\right. .
	\end{split}
\end{align}
\noindent To compute $\Phi$ using \eqref{Theta}, we still need to compute $Z$.
However, $Z$ is random since $\{\xi_i\}$ depends on the unknown target motion.
Fortunately, $\Theta(\mathbf{x},\{p_i^s\},\{p_i^c\}; z)$ is a monotonically increasing function of $z$ due to the property of binomial functions.
Therefore, we can bound $Z$ from below to get a lower bound of $\Phi$.
Specifically, let $\mathcal{Q}(\mathbf{x},\{p_i^s\},\{p_i^c\})=\{i\in\mathcal{S}:\text{SINR}_i^s\geq\beta\}$, and denote
the intersection angle between the $i$-th and $j$-th devices with $i,j\in\mathcal{Q}$ as
$\theta_{i,j}$ ($0\leq\theta_{i,j}\leq \pi$). 
Then we construct matrix $\mathbf{T}(\mathbf{x},\{p_i^s\},\{p_i^c\})$ , with the element $(i,j)$ being
\begin{align}
	\begin{split}
		T_{ij}= \left \{
		\begin{array}{lr}
			1, & \mathrm{if} \quad \frac{\pi}{3} \leq \theta_{ij} \leq \frac{2\pi}{3} \ \mathrm{and} \ i < j  \\
			-1,& \mathrm{if} \quad \frac{\pi}{3} \leq \theta_{ij} \leq \frac{2\pi}{3} \ \mathrm{and} \ i > j \\
			0, &\mathrm{otherwise}
		\end{array}
		\right.,
	\end{split}
\end{align}
and the following proposition is established.
\begin{proposition}
The global inference accuracy $\Phi$ satisfies 
\begin{align}
\Phi(\mathbf{x},\{p_i^s\},\{p_i^c\})
\!\geq\!\underbrace{\Theta\left
(\!\mathbf{x},\{p_i^s\}\{p_i^c\};\frac{1}{2}\mathrm{Rank}(\mathbf{T})\!\right)\!}_{:=\Phi_{\mathsf{LB}}}.
\end{align}
\end{proposition}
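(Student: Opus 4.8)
\emph{Proof plan.}
The plan is to leverage that $\Theta(\cdot\,;z)$ is nondecreasing in $z$ (established just above the statement), so that it suffices to prove the \emph{deterministic} bound $Z\ge\tfrac12\mathrm{Rank}(\mathbf{T})$ on the random count $Z$ of devices meeting \emph{both} the SINR and the angle requirements; then $\Phi=\Theta(\cdot\,;Z)\ge\Theta(\cdot\,;\tfrac12\mathrm{Rank}(\mathbf{T}))=\Phi_{\mathsf{LB}}$. Since $\mathbf{T}$ is real and skew-symmetric, $\mathrm{Rank}(\mathbf{T})$ is even, so $\tfrac12\mathrm{Rank}(\mathbf{T})$ is an integer and a legitimate argument of $\Theta$.

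The core is a planar pigeonhole lemma: \emph{for every target motion direction and every pair $i,j\in\mathcal{Q}$ with $\tfrac{\pi}{3}\le\theta_{ij}\le\tfrac{2\pi}{3}$, at least one of $|\cos\xi_i|\ge\alpha$ or $|\cos\xi_j|\ge\alpha$ holds} (recall $\alpha=\tfrac12$). I would prove it by encoding the two line-of-sight directions and the motion direction as angles $\phi_i,\phi_j,\psi$ in the sensing plane: since $|\cos\xi_k|=|\cos(\psi-\phi_k)|$, the failure event $|\cos\xi_k|<\tfrac12$ is exactly $(\psi-\phi_k)\bmod\pi\in(\tfrac{\pi}{3},\tfrac{2\pi}{3})$. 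If this held for both $k=i$ and $k=j$, then subtracting forces $(\phi_j-\phi_i)\bmod\pi\in[0,\tfrac{\pi}{3})\cup(\tfrac{2\pi}{3},\pi)$, which is disjoint from the edge window $[\tfrac{\pi}{3},\tfrac{2\pi}{3}]$ that defines $T_{ij}\ne0$ — a contradiction. Note the closed window in the definition of $T_{ij}$ and the open failure set for $|\cos\xi|$ dovetail precisely, so no boundary cases arise.

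Given the lemma, set $A=\{i\in\mathcal{Q}:|\cos\xi_i|\ge\alpha\}$, so $Z=|A|$, and $A^{c}=\mathcal{Q}\setminus A$. The lemma says no pair in $A^{c}$ meets the edge condition, hence $T_{ij}=0$ whenever $i,j\in A^{c}$; equivalently, every nonzero entry of $\mathbf{T}$ has its row or its column index in $A$. Thus $\mathbf{T}=\mathbf{R}+\mathbf{C}$, where $\mathbf{R}$ collects the entries in rows indexed by $A$ (so $\mathrm{Rank}(\mathbf{R})\le|A|$) and $\mathbf{C}$ collects the rest, all lying in columns indexed by $A$ (so $\mathrm{Rank}(\mathbf{C})\le|A|$). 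Hence $\mathrm{Rank}(\mathbf{T})\le\mathrm{Rank}(\mathbf{R})+\mathrm{Rank}(\mathbf{C})\le 2|A|=2Z$, i.e.\ $Z\ge\tfrac12\mathrm{Rank}(\mathbf{T})$, and combining with the first paragraph gives the proposition.

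The step I expect to be the main obstacle is the geometric lemma, together with pinning down the model it lives in: the sensing geometry must be two-dimensional (in three dimensions the lemma fails — two orthogonal look directions both fail for a motion direction perpendicular to both), $\theta_{ij}$ must be the angle between the two line-of-sight directions, and the $60^{\circ}$–$120^{\circ}$ window in $\mathbf{T}$, the value $\alpha=\tfrac12$, and the open/closed conventions all have to be mutually consistent for the dichotomy to be gap-free. Everything after the lemma is short linear algebra.
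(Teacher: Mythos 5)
Your proposal is correct, and its first half is essentially the paper's argument: the monotonicity of $\Theta(\cdot;z)$ in $z$ reduces everything to the deterministic bound $Z\geq\tfrac{1}{2}\mathrm{Rank}(\mathbf{T})$, and your pairwise lemma (for any pair with $\theta_{ij}\in[\tfrac{\pi}{3},\tfrac{2\pi}{3}]$, at least one of the two devices satisfies $|\cos\xi|\geq\tfrac{1}{2}$ for every motion direction) is exactly the geometric fact the paper establishes with its two-device case and Fig.~1(c); you simply prove it analytically via the mod-$\pi$ angle bookkeeping rather than by the area argument. Where you genuinely diverge is the passage from that lemma to the rank bound. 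The paper builds the graph on the qualifying devices, argues that a maximum matching yields at least $\max S(\mathcal{E}^\diamond)$ distinct angle-satisfying devices (disjoint edges, one good endpoint each), and then invokes the cited matching result to identify $\max S(\mathcal{E}^\diamond)$ with $\tfrac{1}{2}\mathrm{Rank}(\mathbf{T})$. You instead observe that the set $A$ of devices meeting both requirements is a vertex cover of the nonzero pattern of $\mathbf{T}$, split $\mathbf{T}=\mathbf{R}+\mathbf{C}$ into the rows and remaining columns indexed by $A$, and use rank subadditivity to get $\mathrm{Rank}(\mathbf{T})\leq 2|A|=2Z$. This is more elementary and self-contained: it needs no external theorem, and it sidesteps a subtlety in the paper's citation, since the equality of matching number and half the rank is guaranteed for the Tutte matrix with indeterminate (or generic) entries, whereas for the fixed $\pm1$ signing used here one is only assured of the inequality $\tfrac{1}{2}\mathrm{Rank}(\mathbf{T})\leq\max S(\mathcal{E}^\diamond)$ --- which is all that is needed, and which your cover-based bound delivers directly. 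Your remarks that skew-symmetry makes $\tfrac{1}{2}\mathrm{Rank}(\mathbf{T})$ an integer and that the argument is intrinsically planar are accurate and consistent with the paper's (implicitly two-dimensional) setting.
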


\begin{proof}
To prove the proposition, we only need to show that $\frac{1}{2}\mathrm{Rank}(\mathbf{T})\leq L$ holds.
We adopt the induction method to prove it, and first consider the two-devices case shown in Fig.~\ref{Sensing_Range}.
\begin{itemize}
\item If $\frac{\pi}{3} \leq \theta_{ij} \leq \frac{2\pi}{3}$, then we have $\mathbf{T}=[0,1;-1,0]$ and $\frac{1}{2}\mathrm{rank}(\mathbf{T})=1$.
In such a case, the device $j$ always falls within the union of areas A and B as shown in the left hand side of Fig.~\ref{Sensing_Range}. Hence we must have $\exists l\in\{i,j\},~|\mathrm{cos}(\xi_l)|\geq\frac{1}{2}$, and the target is guaranteed to be sensed by at least one ISAC device no matter in which direction the target moves. 
This implies that $L\geq 1$, and the $\frac{1}{2}\mathrm{Rank}(\mathbf{T})\leq L$ holds.
\item
If $\theta_{ij}$ is not in the range of $[\frac{\pi}{3},\frac{2\pi}{3}]$, then we have $\mathbf{T}=[0,0;0,0]$ and $\frac{1}{2}\mathrm{rank}(\mathbf{T})=0$.
In such a case, there always exist sensing blind regions as shown in the right hand side of Fig.~\ref{Sensing_Range}, and $\frac{1}{2}\mathrm{Rank}(\mathbf{T})\leq L$ also holds.
\end{itemize}

Next, we prove the $N$-devices case via graph theory.
We use $\mathcal{G}=<\mathcal{V},\mathcal{E}>$ to represent an undirected graph, where $\mathcal{V}$ is the set of nodes and $\mathcal{E}$ is the set of edges between any two nodes in $\mathcal{V}$. 
Let $\mathcal{V}=\mathcal{S}$, and connect nodes $i$ and $j$ to form edge $E=(i,j)$ if $i,j \in S$ and $\frac{\pi}{3} \leq \theta_{ij} \leq \frac{2\pi}{3}$;
otherwise, nodes $i$ and $j$ are not directly connected. 
For graph $\mathcal{G}$, 
we define $\mathcal{E}^\diamond$ as a matching of $\mathcal{G}$ if $\mathcal{E}^\diamond\in \mathcal{E}$ and any two distinct edges in $\mathcal{E}^\diamond$ have no common endpoints. 
The size of $\mathcal{E}^\diamond$, denoted as $S(\mathcal{E}^\diamond)$, is defined as the number of edges in $\mathcal{E}^\diamond$.
Then the maximum matching of $\mathcal{G}$, which is denoted by $\max S(\mathcal{E}^\diamond)$, represents the minimum number of ISAC devices with error probabilities $(P_f,P_m)$ no matter in which direction the target moves. 
This means that $L\geq \max S(\mathcal{E}^\diamond)$ holds. 
On the other hand, according to the \cite{max-matching}, we have $\max S(\mathcal{E}^\diamond)=\frac{1}{2}\mathrm{Rank}(\mathbf{T})$. 
Therefore $L\geq \frac{1}{2}\mathrm{Rank}(\mathbf{T})$, and this completes the proof. 
\vspace{-3pt}
\end{proof}


\vspace{-15pt}
\subsection{End-to-End Inference Latency Model}
Mathematically, the end-to-end inference latency is equal to 
$\Xi=T_{\mathsf{ISAC}}+T_{\mathsf{Comp}}$.
Since sensing and communication are executed simultaneously in the ISAC stage, we have $T_{\mathsf{ISAC}}=\max\{T_{s}, T_{c}\}$, where $T_{s}$ is the time for each ISAC device to sense the target and $T_{c}$ is the time for uploading sensing data to the server. 
Now we consider two cases: (i) $T_{s}\geq T_c$, which happens if the target motion is slow and the communication speed is fast; (ii) $T_{s}\leq T_c$, which happens if the target motion is fast and the communication speed is slow. 
In the considered system, due to co-channel interference brought by multiple working devices, the communicaion time is typically larger than the sensing time, so we have $T_{\mathsf{ISAC}}=T_{c}$. Since the server needs to wait for the last data to be received for subsequent fusion, $T_{c}$ is upper bounded by the device with the worst channel, which is given by
$T_{c}\leq\max\limits_{i \in S} \frac{D}{BR_{i}(\mathbf{x},p_{i}^c)}.$
Where $D$ and $B$ represent the data volume per sample and the bandwidth for each device, and $R_{i}$ is the uplink spectral efficiency of the $i$-th device in bps/Hz, which is written as
\begin{align}\label{spectral efficiency}
	R_{i}(\mathbf{x},p_{i}^{c})=x_{i}\log_{2}\left(1+\frac{p_{i}^{c}|\mathbf{w}_i^{H}\mathbf{h}_i|^2}
	{\|\mathbf{w}_{i}^{H}\|_2^{2}\sigma_{c}^2}\right).
\end{align}
\indent Moreover, the model computation time $T_{\mathsf{Comp}}$ includes signal processing and DNN inference time, which is $T_{\mathsf{comp}}=|S|N_{\mathsf{flop}}/f$, where $N_{\mathsf{flop}}$ and $f$ represent the number of floating-point operations needed for processing each data sample and computing speed (FLOPs/s) at the server, respectively. The total latency for one inference round is bounded by
\begin{align}
	\Xi\left(\mathbf{x},\{p_{i}^{s}\},\{p_{i}^{c}\}\right)&
 \leq
 \underbrace{
 \max\limits_{i \in S} \frac{D}{BR_{i}(\mathbf{x},\{p_{i}^{c}\})}+\frac{|S|N_{\mathsf{flop}}}{f}}_{:=
 \Xi_{\mathsf{UB}}\left(\mathbf{x},\{p_{i}^{s}\},\{p_{i}^{c}\}\right)}.
\end{align}

\vspace{-0.25in}
\subsection{MM Method and JPADS Algorithms}
Based on the results in Sections III-A and III-B, we adopt the MM technique to replace 
functions $(\Phi,\Xi)$ with $(\Phi_{\mathrm{LB}},\Xi_{\mathrm{UB}})$. 
Such a procedure would guarantee the worst-case performance of the ISAC edge inference system. Then problem $\mathrm{P0}$ is transformed into 
\begin{subequations}
	\begin{align}
	(\mathrm{P}1)\mathop{\mathrm{min}}_{\substack{\mathbf{x},\{p_{i}^{s}\},\{p_{i}^{c}\}}}~
	&\!\Big(1\!-\!\Phi_{\mathsf{LB}}(\mathbf{x},\{p_{i}^{s}\},\{p_{i}^{c}\}), 
 \Xi_{\mathsf{UB}}(\mathbf{x},\{p_{i}^{s}\},\{p_{i}^{c}\})\!\Big)\nonumber \\
	\mathrm{s.t.}~~~ &\text{SINR}_i^{\mathrm{s}} \geq \beta,~\forall i \in \mathcal{S}, \label{P101} \\
	&(6a)-(6c). \label{P102}
	\end{align}
\end{subequations}

This is a multi-objective mixed integer nonlinear programming (MO-MINLP) problem, and finding its Pareto optimal solutions is NP-hard in general.
However, $\mathrm{P}1$ enjoys a special property that enables us to achieve Pareto optimality if the number of ISAC devices is small. 
In particular, 
given arbitrary device activation $\{\mathbf{x}=\mathbf{x}'\}$, problem $\mathsf{P}1$ with respect to $\{p_{i}^{s},p_{i}^{c}\}$ is a convex problem
	\begin{align}
		(\mathsf{P}2):\mathop{\mathrm{max}}_{\substack{\{p_{i}^{s}\},\{p_{i}^{c}\}}}~~
		&\min_{\forall i \in \mathcal{S}'}~{\frac{p_{i}^{c}|\mathbf{w}_i^{H}\mathbf{h}_i|^2}{\Vert \mathbf{w}_i^{H} \Vert_{2}^2\sigma_{c}^{2}}} \nonumber\\
		\mathrm{s.t.}~~~ &(6b), (6c), \text{SINR}_i^{\mathrm{s}} \geq \beta, ~\forall i \in \mathcal{S}',
	\end{align}
which can be solved by CVX with a computational complexity of $\mathcal{O}(N^{3.5})$.
Note that to obtain $\mathsf{P}2$ from $\mathsf{P}1$, we have removed the terms not related to $\{p_{i}^{s},p_{i}^{c}\}$, and leveraged the monotonicity of logarithm function.
With the above property, the set of all Pareto optimal solutions can be found via a two-tier algorithm, with the outer layer explores all possible values of $\mathbf{x}$ using exhaustive tree search, and the inner layer solving $\mathsf{P}2$ with a fixed $\mathbf{x}$ controlled by the outer layer. 
This algorithm is named optimal-JPADS.

In practice, the number of ISAC devices may be large, and the optimal-JPADS algorithm would be time consuming. 
To this end, we further propose an iterative local search method, which is much faster and empirically achieves a solution set close to the Pareto optimality (as shown by experiments in Fig.~3b of Section IV).\footnote{The optimal solution to the weighted-sum problem of $\mathsf{P}1$ is not guaranteed to be a Pareto optimal solution \cite{WeakPareto}.}
First, we start from a feasible solution of $\mathbf{x}$, denoted as $\mathbf{x}^{[0]}$, and defines the neighborhood of $\mathbf{x}^{[0]}$ as:
\begin{align}
	\mathcal{N}(\mathbf{x}^{[0]})=\{\mathbf{x}:||\mathbf{x}-\mathbf{x}^{[0]}||_0\leq L,~x_i\in\{0,1\}, \forall i\},
\end{align}
which $1 \leq L \leq N$ is the variable size. 
Second, we randomly flip $L$ elements of $\mathbf{x}^{[0]}$ to generate a new feasible solution $\mathbf{x}' \in \mathcal{N}(\mathbf{x}^{[0]})$. Third, with the choice of $\mathbf{x}$ fixed to $\mathbf{x}=\mathbf{x}'$, we solves $\mathrm{P2}$ and obtains the associated optimal solution $\{p_{i}^{s'},p_{i}^{c'}\}$.
Fourth, we compute the weighted-sum function $\Psi(\mathbf{x}',\{p_{i}^{s'}\},\{p_{i}^{c'}\})=\mu(1-\Phi_{\mathsf{LB}})+(1-\mu)\Xi_{\mathsf{UB}}$, where $\mu\in[0,1]$ denotes the weighting factor to balance the two objective functions, and consider the following two branches:\footnote{If the magnitudes of two conflicting objectives are close to each other, normalization is not needed  \cite{Normalize}.} 
\begin{itemize}
     
\item If $\Psi(\mathbf{x}',\{p_{i}^{s'}\},\{p_{i}^{c'}\})\leq\Psi(\mathbf{x}^{[0]},\{p_{i}^{s[0]}\},\{p_{i}^{c[0]}\})$, we update $\mathbf{x}^{[1]}\leftarrow\mathbf{x}'$. Then we construct the neighborhood of $\mathcal{N}(\mathbf{x}^{[1]})$ to execute the next iteration; 
\item If $\Psi(\mathbf{x}',\{p_{i}^{s'}\},\{p_{i}^{c'}\})\!>\!\Psi(\mathbf{x}^{[0]},\{p_{i}^{s[0]}\},\{p_{i}^{c[0]}\})$, we execute the random flipping process repeatedly to generate another feasible solution within the $\mathcal{N}(\mathbf{x}^{[0]})$ until $\Psi(\mathbf{x}',\{p_{i}^{s'}\},\{p_{i}^{c'}\})\leq\Psi(\mathbf{x}^{[0]},\{p_{i}^{s[0]}\},\{p_{i}^{c[0]}\})$. 
\end{itemize}
This process is executed iteratively until the maximum number of iterations $\overline{I}$ is reached, generating a sequence $\{\mathbf{x}^{[1]},\mathbf{x}^{[2]},\cdots\}$. The entire algorithm is named fast-JPADS, and its complexity is $\mathcal{O}(\overline{I}\,N^{3.5})$.

\begin{figure*}[!t]
\centering
	\subfigure[]{
		\label{Doppler}
		\includegraphics[width=0.34\linewidth]{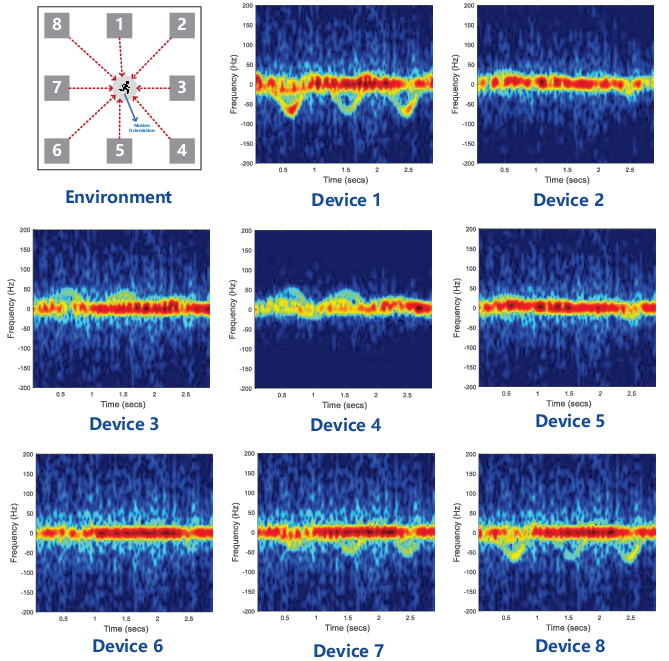}}
	\subfigure[]{
		\label{Iteration}
		\includegraphics[width=0.48\linewidth]{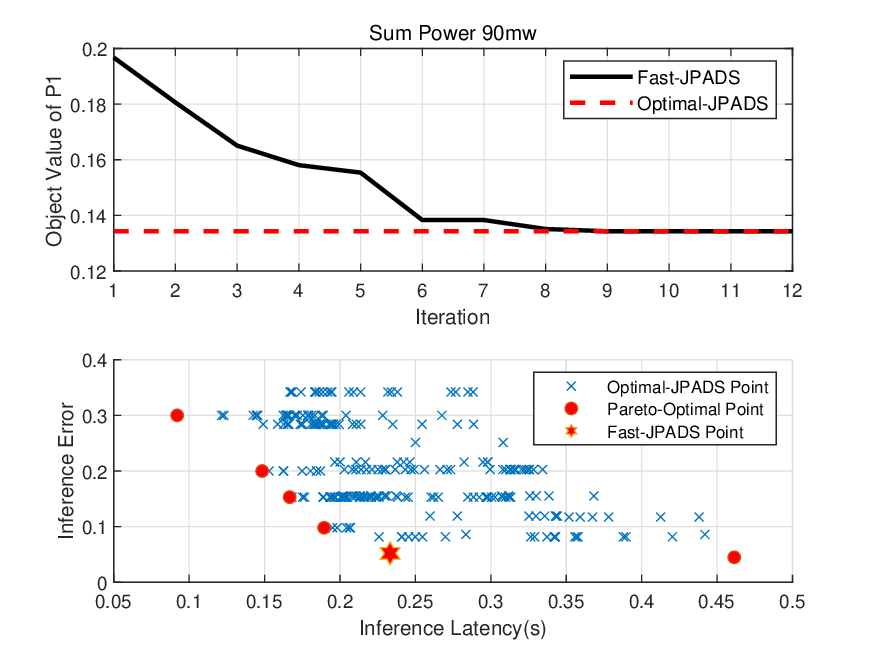}}
    \subfigure[]{
		\label{Accuracy}
		\includegraphics[width=0.45\linewidth]{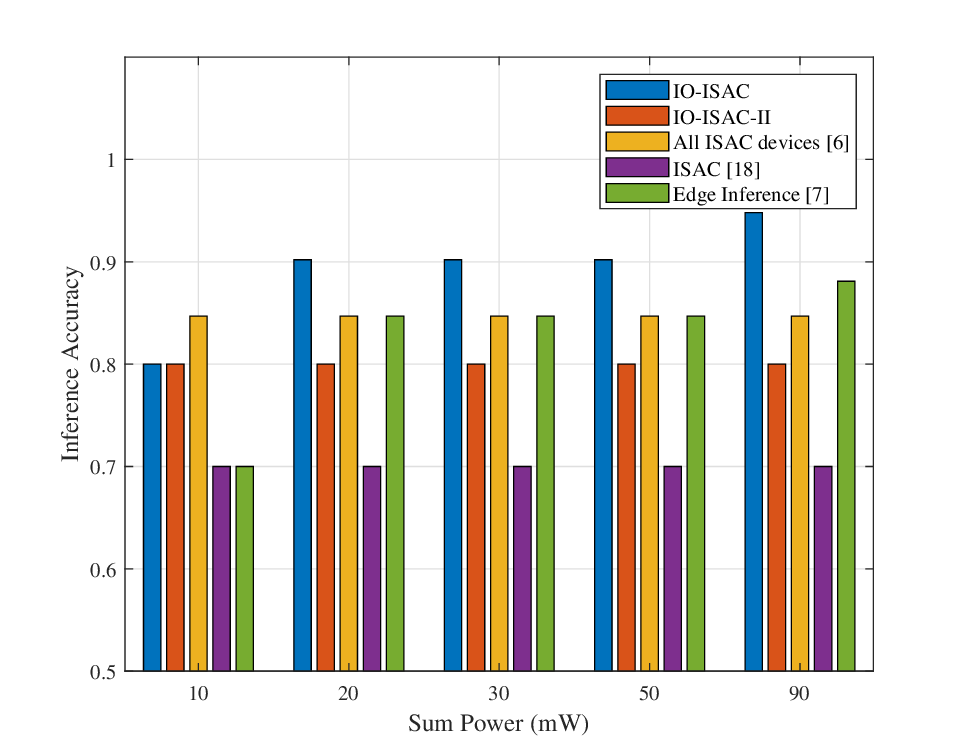}}
    \subfigure[]{
		\label{Latency}
		\includegraphics[width=0.45\linewidth]{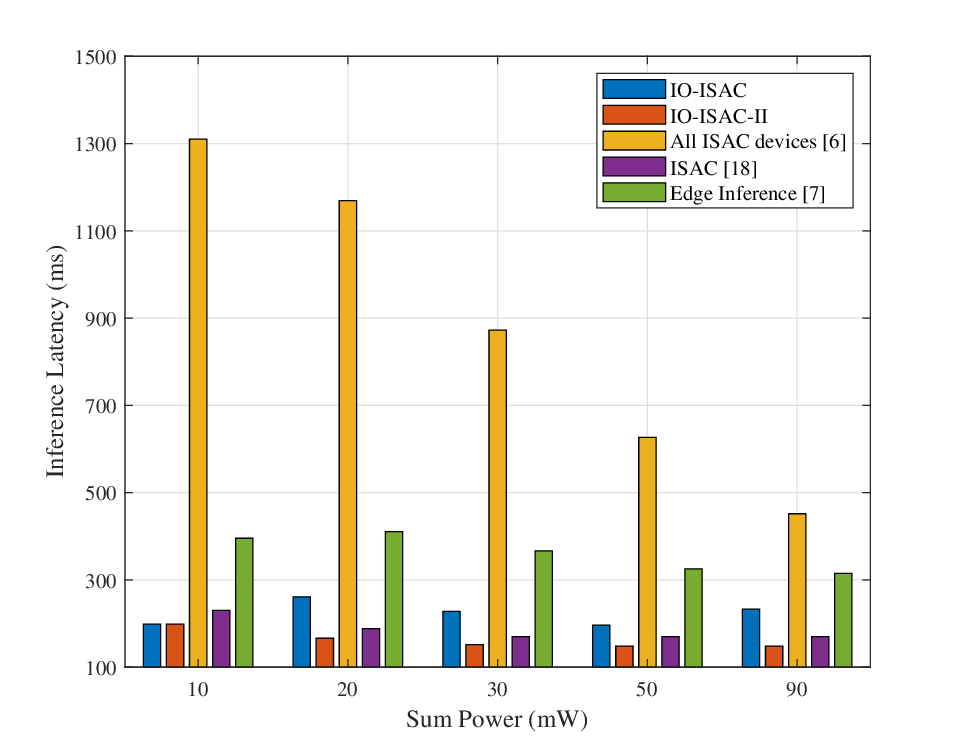}}
  \vspace{-0.15in}
	\caption{Simulation results. (a) Scenario and spectrograms. Devices 1 to 8 are located at $(0,5)$, $(10,5)$, $(10,0)$, $(-10,-5)$, $(0,-5)$, $(-10,-5)$, $(-10,0), (-10,5)$, respectively. Server and target are located at $(-2.5,10)$ and $(5,0)$, respectively. $P_{f}=P_{m}=0.1$ and $\lambda=3$ for all spectrograms; (b) The objective value of P1 versus the number of iterations for the fast-JPADS and the associated inference error-latency region; (c) Comparison of the inference accuracy between the IO-ISAC and benchmark schemes; (d) Comparison of the inference latency between the IO-ISAC and benchmark schemes.}
	\label{Entire_2}
\vspace{-0.2in}
\end{figure*}

\vspace{-10pt}
\section{Simulation Results}
This section presents simulation results to verify the performance of the IO-ISAC scheme. We simulate the case of $N=8$ in a conference hall with size $10\,\rm{m}\times20\,\rm{m}$, and the associated sensing spectrograms are shown in Fig.~\ref{Doppler}.
Unless otherwise specified, the maximum transmit power of each device is $P_T=30\,$mW and the total transmit power budget is $P_{\rm{sum}}=90\,$mW. 
The noise powers are set to $\sigma_{s}^2=-90\,$dBm and $\sigma_{c}^2=-60\,$dBm.
All channels are modeled as distance-dependent rician fading, with $30\,$dB pathloss at unit distance of $1\,$m. The data volume per sample is $0.125\,$MB.

First, the sum of two objectives in $\mathrm{P1}$ versus the number of iterations under the fast-JPADS algorithm is shown in Fig.~\ref{Iteration}. 
It can be seen that the fast-JPADS converges to a value close to the lower bound obtained by optimal-JPADS. 
The number of iterations required for the fast-JPADS to converge is around $10$. 
Consequently, we can early stop the iteration in practice, e.g., $\overline{I}=10$. 
On the other hand, the inference error-latency region of the fast-JPADS is shown in Fig.~\ref{Iteration}. It can be seen that the fast-JPADS achieves a region close to the Pareto front obtained by optimal-JPADS. 
In addition, the magnitudes of inference latency and error are close to each other.
This corroborates our analysis in Section III-C.
Due to its satisfactory performance, in the subsequent experiments, we adopt fast-JPADS for IO-ISAC scheme.

Next, to evaluate the performance gain brought by multi-view fusion, we compare the IO-ISAC scheme with that of ISAC \cite{Without_fusion}, which selects a single ISAC device (e.g., device $6$ in Fig.~\ref{Doppler}). It can be seen from Fig.~\ref{Accuracy} that the inference accuracy of IO-ISAC is $10\%$ to $24\%$ higher than that of the ISAC scheme. This is because the fluctuation of spectrogram becomes weak as the target's orientation is almost orthogonal to the sensing angle of device $6$ as shown in Fig.~\ref{Doppler}, resulting in subtle features for edge inference. In contrast, with the proposed IO-ISAC, the fluctuation of spectrograms is guaranteed to be strong, since there always exists a certain device outside the blind spot according to \textbf{Proposition 1}. For instance, at $P_{\rm{sum}}=10\,$mW, the IO-ISAC activates devices in $S=\{1,2\}$, and the two devices collaboratively eliminate the ``sensing blind spots''. We define the accuracy gap between IO-ISAC and ISAC as \textbf{multi-view fusion gain}. This gain increases with the sum power, since 
a higher view diversity can be exploited under a looser power constraint. For instance, at $P_{\rm{sum}}=30\,$mW and $90\,$mW, the number of activated devices becomes $4$ (i.e., $S=\{1,2,6,8\}$) and $6$ (i.e., $S=\{1,2,4,5,6,8\}$), respectively. We also provide a variation of IO-ISAC, denoted as IO-ISAC-2, which increases weight on the inference latency. Therefore, compared with IO-ISAC, IO-ISAC-2 has a smaller inference latency.


Then, to demonstrate the importance of device scheduling, we compare IO-ISAC with the all-devices scheme (i.e., $\mathbf{x}=[1,1,\ldots]^T$ with $|\mathcal{S}|\!=\!8$), which follows the design criteria in \cite{Aircomp}. It can be seen from Fig.~\ref{Accuracy} that 
the all-devices scheme leads to a lower inference accuracy than that of IO-ISAC although more devices are activated. This is because the co-device interference deteriorates the sensing SINRs at all devices. Moreover, as seen from Fig.~\ref{Latency}, the inference latency of the all-devices scheme exceeds $400\,$ms and is the highest among all the schemes. This is because different ISAC devices compete for the limited communication resources in the data uploading stage. In contrast, the proposed IO-ISAC selects a proper number of activated devices for collaborative inference, which results in a ``win-win'' situation that simultaneously reduces the error and latency. The gap between IO-ISAC and the all-devices scheme is \textbf{device scheduling gain}.

Finally, to evaluate the acceleration brought by ISAC, we compare the IO-ISAC scheme with the edge inference \cite{GNN_Edge}, which executes sensing and communication sequentially. It can be seen from Fig.~\ref{Accuracy} and Fig.~\ref{Latency} that the proposed IO-ISAC achieves an end-to-end latency significantly smaller than that of edge inference while guaranteeing similar inference accuracy. This demonstrates the benefits of one-stage reduction by merging the sensing and communication stages. We define this latency gap between IO-ISAC and edge inference as \textbf{ISAC acceleration gain}. Note that the ISAC co-functionality interference must be properly handled; otherwise, the gain may vanish.

\vspace{-5pt}
\section{Conclusion}
This paper has studied the end-to-end ISAC edge inference problem, and has derived two tractable models that bound the end-to-end edge inference error and the end-to-end edge inference latency from above. Leveraging these bounds, we have derived two JPADS algorithms and obtained the multi-view fusion gain, scheduling gain, and ISAC acceleration gain in an edge intelligence system. Simulation results have demonstrated the superiority of the IO-ISAC scheme.

\scriptsize
\bibliographystyle{IEEEtran}


\end{document}